\newcommand{\T}{\mathcal{T}}
\newcommand{\I}{\mathcal{I}}
\newcommand{\baseT}{\mathtt{b}}
\newcommand{\pT}{\mathtt{p}}
\newcommand{\qT}{\mathtt{q}}
\newcommand{\tT}{\mathtt{t}}
\newcommand{\type}{\mathtt{type}}
\newcommand{\id}[3][]{#2 =_{#1} #3}
\newcommand{\idtype}[3][]{\id[#1]{#2}{#3}}
\newcommand{\refl}[1]{\ensuremath{\mathsf{1}_{#1}}}
\newcommand{\defeq}{\coloneqq}
\newcommand{\sm}[1]{\Sigma_{#1}}
\newcommand{\prd}[1]{\Pi_{#1}}
\newcommand{\lam}[1]{\lambda_{#1}}
\newcommand{\iseq}[1]{\mathsf{iseq}(#1)}
\newcommand{\nathom}{\mathsf{nat}}
\newcommand{\ap}[2]{\mathsf{ap}_{#1}(#2)}
\newcommand{\J}[2]{\mathsf{J}(#1,#2)}
\newcommand{\opp}[1]{\mathord{{#1}^{-1}}}
\newcommand{\idfun}[1]{\mathsf{id}_{#1}}
\newcommand{\comp}{\circ}
\newcommand{\pair}{\ensuremath{\mathsf{proj}^{\mathord{=}}}}
\newcommand{\pairpath}{\ensuremath{\mathsf{pair}^{\mathord{=}}}}
\newcommand{\funext}{\mathsf{funext}}
\newcommand{\happly}{\mathsf{hap}}
\newcommand{\fst}{\pi_1}
\newcommand{\snd}{\pi_2}
\newcommand{\Sn}{\mathbf{S}}
\newcommand{\base}{\mathsf{base}}
\newcommand{\lloop}{\mathsf{loop}}
\newcommand{\transfib}[3]{\ensuremath{\mathsf{trans}^{#1}(#2,#3)}}
\newcommand{\ct}{%
  \mathchoice{\mathbin{\raisebox{0.5ex}{$\displaystyle\centerdot$}}}%
             {\mathbin{\raisebox{0.5ex}{$\centerdot$}}}%
             {\mathbin{\raisebox{0.25ex}{$\scriptstyle\,\centerdot\,$}}}%
             {\mathbin{\raisebox{0.1ex}{$\scriptscriptstyle\,\centerdot\,$}}}}
\newcommand{\mapdepfunc}[1]{\ensuremath{\mathsf{apd}_{#1}}}
\newcommand{\mapdep}[2]{\ensuremath{\mapdepfunc{#1}\mathopen{}\left(#2\right)\mathclose{}}}
\newtheorem{theorem}{Theorem}						
\newtheorem{definition}[theorem]{Definition}
\newtheorem{lemma}[theorem]{Lemma}
\title{The equivalence of the torus and the product of two circles in homotopy type theory}
\author{Kristina Sojakova}
\date{September 2015}
\abstract{Homotopy type theory is a new branch of mathematics which merges insights from abstract homotopy theory and higher category theory with those of logic and type theory. It allows us to represent a variety of mathematical objects as basic type-theoretic constructions, higher inductive types. We present a proof that in homotopy type theory, the torus is equivalent to the product of two circles. This result indicates that the synthetic definition of torus as a higher inductive type is indeed correct.}
\keywords{homotopy type theory, torus, unit circle, higher inductive type}
\begin{document}
\renewcommand*{\thepage}{title-\arabic{page}} 
\maketitle
\renewcommand*{\thepage}{\arabic{page}} 

\section{Introduction}

Homotopy type theory (HoTT) \cite{hott} is a new branch of mathematics which merges insights from abstract homotopy theory and higher category theory with those of logic and type theory. A number of well-known results in algebraic topology have been established within HoTT and formally verified using the proof assistants Agda \cite{agda} and Coq \cite{coq}; these include the calculation of $\pi_n(\mathbf{S}^n)$ (\cite{licata_shulman,licata_brunerie_13}); the Freudenthal Suspension Theorem \cite{hott}; the Blakers-Massey Theorem \cite{hott}, the van Kampen theorem \cite{hott}, and the Mayer-Vietoris theorem \cite{mayer-vietoris}. 

As a formal system, HoTT is an extension of Martin-L{\"o}f's dependent type theory with two new concepts: Voevodsky's \emph{univalence axiom} (\cite{ssets,uf_talk}) and \emph{higher inductive types} (\cite{lumsdaine_blog,shulman_blog}). The univalence axiom can be paraphrased as stating that \emph{equivalent types are equal}, and hence we can reason about them using the identity elimination principle. While we do not make an explicit use of the axiom in this paper, we use one of its most important consequences - the function extensionality principle - which states that two pointwise equal functions are in fact equal (\cite {funext_proof}, Ch.~4.9 of \cite{hott}). 

The second main feature of HoTT, higher inductive types, are a higher-dimensional generalization of ordinary inductive types which allows us to declare constructors involving the \emph{path spaces} of the type $X$ being defined, rather than just $X$ itself. This means that we can define the higher inductive type $X$ \emph{e.g.}, by means of the constructors $\base : X$, $\lloop : \base =_X \base$. While $\base$ is an ordinary nullary constructor, akin to the constant $0$ in the definition of natural numbers, $\lloop$ is a term of an identity type over $X$, not $X$ itself. Intuitively, we can draw the type $X$ as consisting of the point $\base$ and a loop from $\base$ to $\base$ - also known as the circle:

\begin{center}	
\begin{tikzpicture}[scale=1.3,cap=round,>=latex]
        \draw[thick] (0cm,0cm) circle(0.7cm);
				\filldraw[black] (270:0.7cm) circle(1pt);           
				\draw (270:1cm) node[fill=white] {$\base$};				
			  \draw (0:1.2cm) node[fill=white] {$\lloop$};
\end{tikzpicture}			
\end{center}

This is not an isolated occurrence: higher inductive types turn out to be well suited for representing a wide variety of mathematical objects, and the definitions generally require very little prior development. Most of the difficult work then lies in showing that such a ``synthetic" definition is indeed the ``right" one, in the sense that the higher inductive type representing, \emph{e.g.}, the circle or the torus does possess the expected mathematical properties. For instance, we would like to be able to show that in HoTT, the fundamental group of the circle is the group of integers, and that the torus is the product of two circles.

The former result was shown by \cite{licata_shulman} and notably, the proof they give is much more concise than its homotopy-theoretic counterpart. In this paper, we present the full proof of the latter result that the torus $T^2$ is equivalent, in a precise sense, to the product $\Sn^1 \times \Sn^1$ of two circles. This problem was brought to the author's attention during the Special Year on Univalent Foundations at the Institute for Advanced Study in 2012/2013. During that time, the author gave a sketch of the proof\footnote{In personal correspondence, P. Lumsdaine stated he also had a sketch of a proof, which has not been made public.} and a year later expanded it into a full writeup \cite{torus}, which was included in the HoTT Book exercise solutions file but never published. In summer of 2014, Dan Licata and Guillaume Brunerie produced a similar, formalized proof of the result which builds upon their cubical library for the Agda proof assistant. This proof later appeared in a published paper \cite{licata_brunerie}. In the conclusion we provide a more detailed comparison of how the proof presented here compares to the one by Licata and Brunerie. 

In \cite{licata_cubical}, Licata later presented a proof of the same result in cubical type theory. This proof is much simpler since the cubical type theory seems better suited for arguments involving higher paths; however, this new theory is itself still under development.

%%%%%%%%%%%%%%%%%%%%%%%%%%%%%%%%%%%%%%%%%%%%%%%%%%%%%%%%%%%%%%%%%%%%%%%%%%%%%%%%%%%%%%%%%%%%%%%%%%%%
%%%%%%%%%%%%%%%%%%%%%%%%%%%%%%%%%%%%%%%%%%%%%%%%%%%%%%%%%%%%%%%%%%%%%%%%%%%%%%%%%%%%%%%%%%%%%%%%%%%%

\section{Preliminaries}
Summarizing from \cite{hott}, HoTT is a dependent type theory with
\begin{itemize}

\item dependent pair types $\sm{x:A}{B(x)}$ and dependent function types $\prd{x:A}{B(x)}$. The non-dependent versions are denoted by $A \times B$ and $A \to B$.

\item intensional identity types $\idtype[A]{x}{y}$. We have the usual formation and introduction rules, where the identity path on $x : A$ will be denoted by $\refl{x}$. The elimination and computation rules are recalled below:
\begin{mathpar}
\inferrule{E : \prd{x,y : A} \idtype[A]{x}{y} \to \type \\ d : \prd{x : A} E(x,x,\refl{x})}
          {\J{E}{d} : \prd{x,y : A}\prd{p : \idtype[A]{x}{y}} E(x,y,p)}

\inferrule{E : \prd{x,y : A} \idtype[A]{x}{y} \to \type \\ d : \prd{x : A} E(x,x,\refl{x}) \\ a : A}
          {\J{E}{d}(a,a,\refl{a}) \equiv d(a) : E(a,a,\refl{a})}
\end{mathpar}
As usual, these rules are applicable in any context $\Gamma$, which we generally omit. If the type $\idtype[A]{x}{y}$ is inhabited, we call $x$ and $y$ \emph{equal}. If we do not care about the specific equality witness, we often simply say that $\id[A]{x}{y}$. A term $p : \id[A]{x}{y}$ will be often called a \emph{path} and the process of applying the identity elimination rule will be referred to as \emph{path induction}. Definitional equality between $x,y:A$ will be denoted as $x \equiv y : A$.
\end{itemize}

Proofs of identity behave much like paths in topological spaces: they can be reversed, concatenated, mapped along functions, etc. Below we summarize a few of these properties:
\begin{itemize}
\item For any path $p : \id[A]{x}{y}$ there is a path $\opp{p} : \id[A]{y}{x}$, and we have $(\refl{x})^{-1} \equiv \refl{x}$.
\item For any paths $p : \id[A]{x}{y}$ and $q : \id[A]{y}{z}$ there is a path $p \ct q : \id[A]{x}{z}$, and we have
$\refl{x} \ct \refl{x} \equiv \refl{x}$.
\item Associativity of composition: for any paths $p : \id[A]{x}{y}$, $q : \id[A]{y}{z}$, $r : \id[A]{z}{u}$ we have
$\id{(p \ct q) \ct r}{p \ct (q \ct r)}$.
\item We have $\refl{x} \ct p = p$ and $p \ct \refl{y} = p$ for any $p : \id[A]{x}{y}$.
\item For any $p : \id[A]{x}{y}$, $q : \id[A]{y}{z}$ we have $p \ct \opp{p} = \refl{x}$, $\opp{p} \ct p = \refl{y}$, and $\opp{(\opp{p})} = p$, $\opp{(p \ct q)} = \opp{q} \ct \opp{p}$.
\item For any $P : A \to \type$ and $p : \id[A]{x}{y}$ there is a function $\mathtt{trans}^P(p) : P(x) \to P(y)$   
called the \emph{transport}. We furthermore have $\mathtt{trans}^P(\refl{x}) \equiv \lam{x:P(x)} x$.
\item We have $\mathtt{trans}^P(p \ct q) = \mathtt{trans}^P(q) \comp \mathtt{trans}^P(p)$ for any $P : A \to \type$ and $p : \id[A]{x}{y}$, $q : \id[A]{y}{z}$.
\item For any function $f : A \to B$ and path $p : \id[A]{x}{y}$, there is a path $\ap{f}{p} : \id[B]{f(x)}{f(y)}$ and we have
$\ap{f}{\refl{x}} \equiv \refl{f(x)}$. 
\item We have $\ap{f}{\opp{p}} = \opp{\ap{f}{p}}$ and $\ap{f}{p \ct q} = \ap{f}{p} \ct \ap{f}{q}$ for any $f : A \to B$ and $p : \id[A]{x}{y}$, $q : \id[A]{y}{z}$.
\item Given a dependent function $f : \prd{x:A} B(x)$ and path $p : \id[A]{x}{y}$, there is a path $\mapdepfunc{f}(p) : \id[B(y)]{\transfib{B}{p}{f(x)}}{f(y)}$ and we have $\mapdepfunc{f}(\refl{x}) \equiv \refl{f(x)}$.
\item All constructs respect propositional equality.
\end{itemize}

\begin{definition}
For $f,g : \prd{x:A} B(x)$, we define the type \[f \sim g \defeq \prd{a:A} \id[B(a)]{(f(a)}{g(a))}\] and call it the \emph{type of homotopies} between $f$ and $g$.
\end{definition}

\begin{definition}
For $f,g : X \to Y$, $p : x =_X y$, $\alpha : f \sim g$, there is a path
\[ \nathom_\alpha(p) : \ap{f}{p} \ct \alpha(y) = \alpha(x) \ct \ap{g}{p} \]
defined in the obvious way by induction on $p$ and referred to as the \emph{naturality} of the homotopy $\alpha$. Pictorially, we have
\begin{center}
\begin{tikzpicture}
\node (N0) at (1.5,0.85) {$\nathom_\alpha(p)$};
\node (N1) at (0,1.7) {$f(x)$};
\node (N2) at (0,0) {$f(y)$};
\node (N3) at (3,1.7) {$g(x)$};
\node (N4) at (3,0) {$g(y)$};
\draw[-] (N1) -- node[left]{$\ap{f}{p}$} (N2);
\draw[-] (N1) -- node[above]{$\alpha(x)$} (N3);
\draw[-] (N2) -- node[below]{$\alpha(y)$} (N4);
\draw[-] (N3) -- node[right]{$\ap{g}{p}$} (N4);
\end{tikzpicture}
\end{center}
\end{definition}

\noindent A crucial concept in HoTT is that of an equivalence between types. 
\begin{definition}
A map $f : A \to B$ is called an \emph{equivalence} if it has both a left and a right inverse:
\[\iseq{f} \defeq \big(\sm{g:B \to A} (g \circ f \sim \idfun{A})\big) \times \big(\sm{h:B \to A} (f \circ h \sim \idfun{B})\big) \]
We define \[(A \simeq B) \defeq \sm{f:A\to B}\iseq{f}\] and call $A$ and $B$ \emph{equivalent} if the above type is inhabited.
\end{definition}
We call $A$ and $B$ \emph{logically equivalent} if there are exist functions $f:A\to B$, $g:B \to A$. In practice, we often show that two types $A$ and $B$ are equivalent by first exhibiting the logical equivalence of $A$ and $B$ and then showing that the functions $f$ and $g$ compose to identity on both sides. In this case we refer to $f$ and $g$ as forming a \emph{quasi-equivalence} and say that $f$ and $g$ are \emph{quasi-inverses} of each other. A pair of quasi-inverses can always be turned into an equivalence.

Many ``diagram-like" operations on paths turn out to be equivalences. For instance:

\begin{itemize}
\item For any $u : a =_A b$, $v : b =_A d$, $w : a =_A c$, $z : c =_A d$, as in the diagram
\begin{center}
\begin{tikzpicture}
\node (N0) at (0,2) {$a$};
\node (N1) at (2,2) {$b$};
\node (N2) at (0,0) {$c$};
\node (N3) at (2,0) {$d$};
\draw[-] (N0) -- node[above]{$u$} (N1);
\draw[-] (N0) -- node[left]{$w$} (N2);
\draw[-] (N1) -- node[right]{$v$} (N3);
\draw[-] (N2) -- node[below]{$z$} (N3);
\end{tikzpicture}
\end{center}
we have functions
\begin{align*}
\I & : (u \ct v = w \ct z) \to (\opp{u} \ct w \ct z = v)\\
\I^{-1} & : (\opp{u} \ct w \ct z = v) \to (u \ct v = w \ct z)
\end{align*}
defined by path induction on $u$ and $z$, which form a quasi-equivalence.
\end{itemize}

Finally, we show how to construct paths in pair and function types. Given two pairs $c,d:A \times B$, we can easily construct a function \[\pair_{c,d} : (\id{c}{d}) \to(\id{\fst(c)}{\fst(d)}) \times (\id{\snd(c)}{\snd(d)}).\] We can show:
\begin{lemma}
The map $\pair_{c,d}$ is an equivalence for any $c,d:A\times B$.
\end{lemma}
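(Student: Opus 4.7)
The plan is to produce a quasi-inverse $\pairpath_{c,d}$ to $\pair_{c,d}$ and verify the two round-trip identities; by the earlier remark that every pair of quasi-inverses can be upgraded to an equivalence, this suffices.

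First, I define $\pairpath_{c,d}$. Assuming the definitional $\eta$-rule for the product type (so that $c \equiv (\fst(c), \snd(c))$ and similarly for $d$), it suffices, given $p : \fst(c) = \fst(d)$ and $q : \snd(c) = \snd(d)$, to produce a path $(\fst(c), \snd(c)) = (\fst(d), \snd(d))$. This is defined by double path induction on $p$ and $q$: in the base case where both are reflexivity, I return $\refl{(\fst(c), \snd(c))}$.

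Next, I verify the two round-trip identities, both by further path induction. For $\pair_{c,d} \circ \pairpath_{c,d} \sim \idfun{A \times B}$: given $(p,q)$, reduce by $\eta$ to literal pairs and induct simultaneously on $p$ and $q$; in the base case the composite computes to $(\refl{\fst(c)}, \refl{\snd(c)})$, so the identity is witnessed by reflexivity. For $\pairpath_{c,d} \circ \pair_{c,d} \sim \idfun{c = d}$: given $r : c = d$, induct on $r$ to reduce to $r \equiv \refl{c}$; then $\pair_{c,c}(\refl{c}) \equiv (\refl{\fst(c)}, \refl{\snd(c)})$, and after $\eta$-expanding $c$, the defining equation of $\pairpath$ yields $\refl{c}$.

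The one subtle point, and the main thing to watch out for, is that path induction on $p : \fst(c) = \fst(d)$ requires its endpoints to be free variables, and similarly for $r : c = d$. When $c, d$ are arbitrary inhabitants of $A \times B$ rather than literal pairs $(a, b)$, we cannot directly induct on the component paths produced by $\pair_{c,d}$. A definitional $\eta$-rule for $\Sigma$-types (and hence for $\times$) makes the reduction to literal pairs free, and every round-trip calculation then collapses to a trivial reflexivity-base case; without such an $\eta$-rule one must additionally transport along the propositional $\eta$-path $c = (\fst(c), \snd(c))$, which adds some bookkeeping to the coherence calculations but does not change the overall structure of the argument.
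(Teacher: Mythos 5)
Your argument is correct and is the standard one for this lemma (compare Theorem~2.6.2 in the HoTT book); the paper itself states the result without proof, so there is no in-paper argument to compare against. You are right to flag the role of $\eta$ as the one delicate point: path induction on $p : \fst(c) = \fst(d)$ only makes sense after the endpoints have been freed. Note, though, that a judgmental $\eta$-rule does not by itself turn $\fst(c)$ into a free variable; what it does is let you first build a helper $g(a_1,a_2,b_1,b_2,p,q) : (a_1,b_1) = (a_2,b_2)$ by double path induction over fresh variables and then instantiate $a_i, b_i$ with the projections of $c$ and $d$, with $\eta$ providing the judgmental identification of $(\fst(c),\snd(c)) = (\fst(d),\snd(d))$ with $c = d$. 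In the version of type theory the paper is implicitly working in (as in the HoTT book), $\eta$ for pairs is only propositional, and the cleaner route is the one you mention in your last paragraph: apply product induction on $c$ and $d$ first, reducing to literal pairs $(a_1,b_1)$ and $(a_2,b_2)$, after which both your round-trip calculations go through by path induction exactly as written, including the case $\pairpath \comp \pair \sim \idfun{}$ via induction on $r : c = d$. Either way the proof is complete and matches the intended content of the lemma.
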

\noindent We will denote the quasi-inverse of $\pair_{c,d}$ by $\pairpath_{c,d}$. For brevity we will often omit the subscripts. 

Analogously, given two functions $f,g : \prd{x:A} B(x)$, we can construct a function
\[\happly_{f,g} : (\id{f}{g}) \to (f \sim g)\]
Showing that this map is an equivalence (or even constructing a map in the opposite direction) is much harder, and is in fact among the chief consequences of the univalence axiom:
\begin{lemma}
The map $\happly_{f,g}$ is an equivalence for any $f,g:\prd{x:A}B(x)$.
\end{lemma}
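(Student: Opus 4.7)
The plan is to exhibit a quasi-inverse to $\happly_{f,g}$, namely the function extensionality map $\funext : (f \sim g) \to (f = g)$ whose existence is already granted in the introduction. The subtlety is that merely having such an inverse map is not enough: a priori, the operation packaging pointwise equalities into a path could be defined in an ad hoc way that does not compute correctly against $\happly$, so the two round-trip identities must be established by a genuine argument rather than by definitional unfolding.

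First I would handle the round-trip $\funext \comp \happly \sim \idfun{f = g}$. By path induction on $p : \id{f}{g}$, this reduces to establishing $\funext(\happly(\refl{f})) = \refl{f}$, and since $\happly(\refl{f})$ computes to $\lam{x:A} \refl{f(x)}$, it suffices to show $\funext(\lam{x:A} \refl{f(x)}) = \refl{f}$. Assuming the other round-trip, one can derive this by applying that identity to $\alpha \defeq \lam{x} \refl{f(x)} \equiv \happly(\refl{f})$, concluding that $\happly(\funext(\lam{x} \refl{f(x)})) = \lam{x} \refl{f(x)}$ and then transporting the witness back via a path induction on the path being inverted.

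The main obstacle is the other round-trip $\happly \comp \funext \sim \idfun{f \sim g}$, which is not a formal consequence of $\funext$ being a section-up-to-paths candidate. The standard workaround, due to Voevodsky, is to pass through \emph{weak} function extensionality: if each $B(x)$ is a contractible type, then so is $\prd{x:A} B(x)$. This weak form is a consequence of the existence of the map $\funext$ itself, since a homotopy into a fibrewise-contractible family can always be assembled from the centres of contraction. Once weak function extensionality is available, one applies it to the family $x \mapsto \sm{y:B(x)} \id[B(x)]{f(x)}{y}$, each of whose fibres is a contractible singleton, to conclude that the total space $\sm{g:\prd{x:A} B(x)} (f \sim g)$ is contractible.

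Finally, I would invoke the general principle that a map between types is an equivalence whenever the induced map on total spaces of suitable path-fibrations sends one contractible space to another. Concretely, the canonical map from $\sm{g:\prd{x:A} B(x)} \id[]{f}{g}$ (which is contractible, being a singleton) to the contractible space above, sending $(g, p)$ to $(g, \happly(p))$, is therefore an equivalence, and by taking fibres over $g$ this gives the equivalence $\happly_{f,g}$ for every $g$. A detailed treatment of these steps is carried out in Ch.~4.9 of \cite{hott}.
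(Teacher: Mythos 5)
The paper does not actually prove this lemma; it simply refers to Ch.~4.9 of \cite{hott}, and your third and fourth paragraphs reproduce the structure of that argument (naive $\funext$ gives weak function extensionality; weak function extensionality makes $\sm{g:\prd{x:A}B(x)}(f\sim g)$ contractible; a map between contractible total spaces is an equivalence, and restricting to fibres over each $g$ yields the claim). One elision is worth flagging: weak function extensionality applied to the singleton fibres gives contractibility of $\prd{x:A}\sm{u:B(x)}(f(x)=u)$, and to transfer this to $\sm{g:\prd{x:A}B(x)}(f\sim g)$ you also need that the latter is a retract of the former --- the type-theoretic distributivity of $\Pi$ over $\Sigma$ --- together with the fact that retracts of contractible types are contractible. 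This is Lemma~4.9.3 in the cited chapter and is not a formality.

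Your second paragraph, however, is circular as written. From the round trip $\happly\comp\funext\sim\idfun{f\sim g}$ applied to $\lam{x}\refl{f(x)}$ you obtain $\happly\big(\funext(\lam{x}\refl{f(x)})\big)=\lam{x}\refl{f(x)}=\happly(\refl{f})$, but to conclude from this that $\funext(\lam{x}\refl{f(x)})=\refl{f}$ you would need $\happly$ to be injective (equivalently, left-invertible), which is precisely what the lemma is supposed to establish; ``transporting the witness back via a path induction on the path being inverted'' does not circumvent this, since both endpoints of the path $\funext(\lam{x}\refl{f(x)}) : f=f$ are already fixed. Fortunately the paragraph is also unnecessary: the contractibility argument in your final two paragraphs shows directly that $\happly$ is an equivalence, from which both round trips follow at once. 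If you delete the second paragraph, what remains is a sound (if terse) sketch matching the cited source.
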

\begin{proof}
See Ch.~4.9 of \cite{hott}.
\end{proof}
\noindent We will denote the quasi-inverse of $\happly_{f,g}$ by $\funext_{f,g}$.

%%%%%%%%%%%%%%%%%%%%%%%%%%%%%%%%%%%%%%%%%%%%%%%%%%%%%%%%%%%%%%%%%%%%%%%%%%%%%%%%%%%%%%%%%%%%%%%%%%%%
%%%%%%%%%%%%%%%%%%%%%%%%%%%%%%%%%%%%%%%%%%%%%%%%%%%%%%%%%%%%%%%%%%%%%%%%%%%%%%%%%%%%%%%%%%%%%%%%%%%%

\section{The circle $\Sn^1$ and the torus $T^2$}
The circle $\Sn^1$ is a higher inductive type generated by the constructors
\begin{alignat*}{4}
& \base & \; : \; & \Sn^1, \\
& \lloop & \; : \; & \base = \base.
\end{alignat*}
The recursion principle says that given a type $C : \type$ and terms
\begin{alignat*}{4}
& b & \; : \; & C, \\
& l & \; : \; & b = b
\end{alignat*}
there exists a recursor $f : \Sn^1 \to C$ for which $f(\base) \equiv b$ and $\ap{f}{\lloop} = l$. The induction principle says that given a family $E : \Sn^1 \to \type$ and terms
\begin{alignat*}{4}
& b & \; : \; & E(\base), \\
& l & \; : \; & \transfib{E}{\lloop}{b} = b
\end{alignat*}
there exists an inductor $f : \prd{x:\Sn^1} E(x)$ for which $f(\base) \equiv b$ and $\mapdepfunc{f}(\lloop) = l$.\bigskip

The torus $T^2$ is a higher inductive type generated by the constructors
\begin{alignat*}{4}
& \baseT & \; : \; & T^2, \\
& \pT & \; : \; & \baseT = \baseT, \\
& \qT & \; : \; & \baseT = \baseT, \\
& \tT & \; : \; & \pT \ct \qT = \qT \ct \pT
\end{alignat*}
as pictured below:
\begin{center}
\begin{tikzpicture}
\node (N0) at (0,2) {$\baseT$};
\node (N1) at (2,2) {$\baseT$};
\node (N2) at (0,0) {$\baseT$};
\node (N3) at (2,0) {$\baseT$};
\node at (1,1) {$\Downarrow \tT$};
\draw[-] (N0) -- node[above]{$\pT$} (N1);
\draw[-] (N0) -- node[left]{$\qT$} (N2);
\draw[-] (N1) -- node[right]{$\qT$} (N3);
\draw[-] (N2) -- node[below]{$\pT$} (N3);
\end{tikzpicture}
\end{center}
The recursion principle says that given a type $C : \type$ and terms
\begin{alignat*}{4}
& b' & \; : \; & C, \\
& p' & \; : \; & b' = b', \\
& q' & \; : \; & b' = b', \\
& t' & \; : \; & p' \ct q' = q' \ct p',
\end{alignat*}
there exists a recursor $f : T^2 \to C$ for which $f(\baseT) \equiv b'$ and there exist paths $\beta : \ap{f}{\pT} = p'$ and $\gamma : \ap{f}{\qT} = q'$ making the following diagram commute:
\begin{center}
\begin{tikzpicture}
\node (N0) at (0,2) {$\ap{f}{p\ct q}$};
\node (N1) at (4,2) {$\ap{f}{q\ct p}$};
\node (N2) at (0,0) {$\ap{f}{p}\ct\ap{f}{q}$};
\node (N3) at (4,0) {$\ap{f}{q}\ct\ap{f}{p}$};
\node (N4) at (0,-2) {$p' \ct q'$};
\node (N5) at (4,-2) {$q' \ct p'$};
\draw[-] (N0) -- node[above]{via $\tT$} (N1);
\draw[-] (N0) -- node[left]{} (N2);
\draw[-] (N1) -- node[above]{} (N3);
\draw[-] (N2) -- node[left]{via $\beta$, $\gamma$} (N4);
\draw[-] (N3) -- node[right]{via $\gamma$, $\beta$} (N5);
\draw[-] (N4) -- node[below]{$t'$} (N5);
\end{tikzpicture}
\end{center}
Here, each edge represents an equality between its vertices. Unlabeled edges stand for the ``obvious" equalities which follow from the basic properties of identity types, such as the path from $\ap{f}{p\ct q}$ to $\ap{f}{p}\ct\ap{f}{q}$. Edges labeled with, \emph{e.g.}, ``$\text{via} \; \beta, \gamma$" stand for an application of congruence: here $\beta$ is a path from $\ap{f}{p}$ to $p'$ and $\gamma$ is a path from $\ap{f}{q}$ to $q'$. Since path concatenation respects equality, combining $\beta$ and $\gamma$ in a straightforward fashion yields a path from $\ap{f}{p}\ct\ap{f}{q}$ to $p' \ct q'$. 

We note that there may be several natural ways how to implement, \emph{e.g.}, the congruence of path concatenation with respect to path equality: we can perform path induction on the first argument, on the second, or on both. For our purposes the exact definition is immaterial as they are all equal up to a higher path, which is why we only specify the arguments (in this case $\beta$ and $\gamma$). From now on, all paths and diagrams will be annotated in this style.

The induction principle for $\T^2$ is more complicated; it says that given a family $E : T^2 \to \type$, in order to get an inductor $f : \prd{x:T^2} E(x)$ we require terms
\begin{alignat*}{4}
& b' & \; : \; & E(\baseT) \\
& p' & \; : \; & \transfib{E}{\pT}{b'} = b' \\
& q' & \; : \; & \transfib{E}{\qT}{b'} = b' \\
& t' & \; : \; & \big(\ap{\alpha \mapsto \transfib{E}{\alpha}{b'}}{\tT}\big)^{-1} \ct \Big(\T_f(E,\pT,\qT,b') \ct \ap{\mathtt{trans}^E(\qT)}{p'} \ct q'\Big) = \\
& & & \;\;\;\; \;\;\;\; \T_f(E,\qT,\pT,b') \ct \ap{\mathtt{trans}^E(\pT)}{q'} \ct p'
\end{alignat*}
where for any family $E : T^2 \to \type$, paths $\alpha: x =_{T^2} y$, $\alpha' : y =_{T^2} z$ and point $u : E(x)$, the path 
\[\T_f(E,\alpha,\alpha',u) : \transfib{E}{\alpha \ct \alpha'}{u} = \transfib{E}{\alpha'}{\transfib{E}{\alpha}{u}}\] 
is obtained by path induction on $\alpha$ and $\alpha'$. 
The inductor $f$ then has the property that $f(b) \equiv b'$. Furthermore, there exist paths $\beta : \mapdepfunc{f}(\pT) = p'$ and $\gamma : \mapdepfunc{f}{(\qT)} = q'$ satisfying a higher coherence law, which we omit since we do not need it.

\section{Logical equivalence between $\Sn^1 \times \Sn^1$ and $T^2$}
\paragraph*{Left-to-right}
We define a function $f : \Sn^1 \to T^2$ by circle recursion, mapping $\base \mapsto \baseT$ and $\lloop \mapsto \pT$. Thus, we have a definitional equality $f(\base) \equiv \baseT$ and a path $\beta_f : \ap{f}{\lloop} = \pT$.

We define a function $F^\to : \Sn^1 \to \Sn^1 \to T^2$ again by circle recursion, mapping $\base \mapsto f$ and $\lloop \mapsto \funext(H)$, where $H : \prd{x:\Sn^1} f(x) = f(x)$ is defined by circle induction as follows. We map $\base$ to $\qT$ and $\lloop$ to the path
\begin{center}
\begin{tikzpicture}
\node (N0) at (0,3.4) {$\transfib{z \mapsto f(z) = f(z)}{\lloop}{\qT}$};
\node (N1) at (0,1.7) {$\opp{\ap{f}{\lloop}} \ct \qT \ct \ap{f}{\lloop}$};
\node (N2) at (0,0) {$\qT$};
\draw[-] (N0) -- node[right]{\footnotesize $\T_1(\lloop,\qT)$} (N1);
\draw[-] (N1) -- node[right]{\footnotesize $\I(\gamma)$} (N2);
\end{tikzpicture}
\end{center}
where for any $\alpha : x =_{\Sn^1} y$ and $u : f(x) = f(x)$, the path 
\[\T_1(\alpha,u) : \transfib{z \mapsto f(z) = f(z)}{\alpha}{u} = \opp{\ap{f}{\alpha}} \ct u \ct \ap{f}{\alpha} \]
is obtained by a straightforward path induction on $\alpha$, and $\gamma$ is the path
\begin{center}
\begin{tikzpicture}
\node (N0) at (0,4.5) {$\ap{f}{\lloop} \ct \qT$};
\node (N1) at (0,3) {$\pT \ct \qT$};
\node (N2) at (0,1.5) {$\qT \ct \pT$};
\node (N3) at (0,0) {$\qT \ct \ap{f}{\lloop}$};
\draw[-] (N0) -- node[right]{\footnotesize via $\beta_f$} (N1);
\draw[-] (N1) -- node[right]{\footnotesize $\tT$} (N2);
\draw[-] (N2) -- node[right]{\footnotesize via $\beta_f^{-1}$} (N3);
\end{tikzpicture}
\end{center}
Having defined a function $F^\to : \Sn^1 \to \Sn^1 \to T^2$, it is now straightforward to define its curried version $F : \Sn^1 \times \Sn^1 \to T^2$. We note that $F^\to(\base) \equiv f$, and in particular $F(\base,\base) \equiv \baseT$. Furthermore, we have a path $\beta_{F^\to} : \ap{F^\to}{\lloop} = \funext(H)$. Since $\happly$ and $\funext$ form a quasi-equivalence, we have a path
\[ \beta^*_{F^\to} : \happly(\ap{F^\to}{\lloop}) = H \]
The function $H$ is a homotopy between $f$ and $f$ such that $H(\base) \equiv \qT$ and the following diagram commutes:
\begin{center}
\begin{tikzpicture}
\node (Na) at (3,1) {$(1)$};
\node (N0) at (0,2) {$\ap{f}{\lloop} \ct \qT$};
\node (N1) at (0,0) {$\pT \ct \qT$};
\node (N2) at (6,2) {$\qT \ct \ap{f}{\lloop}$};
\node (N3) at (6,0) {$\qT \ct \pT$};
\draw[-] (N0) -- node[left]{\footnotesize via $\beta_f$} (N1);
\draw[-] (N1) -- node[below]{\footnotesize $\tT$} (N3);
\draw[-] (N0) -- node[above]{\footnotesize $\nathom_H(\lloop)$} (N2);
\draw[-] (N2) -- node[right]{\footnotesize via $\beta_f$} (N3);
\end{tikzpicture}
\end{center}
To show this, we note that for any $\alpha : x =_{\Sn^1} y$, applying $\I^{-1}$ to the path
\begin{center}
\begin{tikzpicture}
\node (N0) at (0,3.4) {$\opp{\ap{f}{\alpha}} \ct H(x) \ct \ap{f}{\alpha}$};
\node (N1) at (0,1.7) {$\transfib{z \mapsto f(z) = f(z)}{\alpha}{H(x)}$};
\node (N2) at (0,0) {$H(y)$};
\draw[-] (N0) -- node[right]{\footnotesize $\opp{\T_1(\alpha,H(x))}$} (N1);
\draw[-] (N1) -- node[right]{\footnotesize $\mapdep{H}{\alpha}$} (N2);
\end{tikzpicture}
\end{center}
yields $\nathom_H(\alpha)$: this follows by a path induction on $\alpha$ and a subsequent generalization and path induction on $H(x)$.
The second computation rule for $H$ tells us that \[\mapdep{H}{\lloop} = \T_1(\lloop,\qT) \ct \I(\gamma)\]
Thus
\[\nathom_H(\lloop) = \I^{-1}\big(\opp{\T_1(\lloop,\qT)} \ct \mapdep{H}{\lloop}\big) = \gamma \]
which proves the commutativity of $(1)$.

Finally, we note that for any $\alpha : x =_{T^2} x'$ and $\alpha' : y =_{T^2} y'$, we have path families
\begin{alignat*}{4}
& \mu_x(\alpha') & \;  : \; & \ap{F}{\pairpath(\refl{x},\alpha')} = \ap{F^\to(x)}{\alpha'}\\
& \nu_y(\alpha) & \; : \; & \ap{F}{\pairpath(\alpha,\refl{y})} = \happly(\ap{F^\to}{\alpha},y)
\end{alignat*}
defined by path induction on $\alpha'$ and $\alpha$ respectively.

\paragraph*{Right-to-left}
We define a function $G : T^2 \to \Sn^1 \times \Sn^1$ by torus recursion as follows. We map $\baseT \mapsto (\base,\base)$, $\pT \mapsto \pairpath(\refl{\base},\lloop)$, $\qT \mapsto \pairpath(\lloop,\refl{\base})$, and $\tT \mapsto \Phi_{\lloop,\lloop}$, where for any $\alpha : x =_{\Sn^1} x'$, $\alpha' : y =_{\Sn^1} y'$, the path
\[ \Phi_{\alpha,\alpha'} : \Big(\pairpath(\refl{x},\alpha') \ct \pairpath(\alpha, \refl{y'})\Big) = \Big(\pairpath(\alpha, \refl{y}) \ct \pairpath(\refl{x'},\alpha')\Big) \]
is defined by induction on $\alpha$ and $\alpha'$.

Then we have a definitional equality $G(\baseT) \equiv (\base,\base)$ and paths
\begin{align*}
& \beta^p_G : \ap{G}{\pT} = \pairpath(\refl{\base},\lloop)\\
& \beta^q_G : \ap{G}{\qT} =\pairpath(\lloop, \refl{\base})
\end{align*}
which make the following diagram commute:
\begin{center}
\begin{tikzpicture}
\node (Nc) at (4.5,0) {$(2)$};
\node (N0) at (0,1.7) {$\ap{G}{\pT\ct \qT}$};
\node (N1) at (9,1.7) {$\ap{G}{\qT\ct \pT}$};
\node (N2) at (0,0) {$\ap{G}{\pT}\ct\ap{G}{\qT}$};
\node (N3) at (9,0) {$\ap{G}{\qT}\ct\ap{G}{\pT}$};
\node (N4) at (0,-1.7) {$\pairpath(\refl{},\lloop) \ct \pairpath(\lloop, \refl{})$};
\node (N5) at (9,-1.7) {$\pairpath(\lloop, \refl{}) \ct \pairpath(\refl{},\lloop)$};
\draw[-] (N0) -- node[above]{\footnotesize via $\tT$} (N1);
\draw[-] (N0) -- node[left]{} (N2);
\draw[-] (N1) -- node[above]{} (N3);
\draw[-] (N2) -- node[left]{\footnotesize via $\beta^p_G$, $\beta^q_G$} (N4);
\draw[-] (N3) -- node[right]{\footnotesize via $\beta^q_G$, $\beta^p_G$} (N5);
\draw[-] (N4) -- node[below]{\footnotesize $\Phi_{\lloop,\lloop}$} (N5);
\end{tikzpicture}
\end{center}

%%%%%%%%%%%%%%%%%%%%%%%%%%%%%%%%%%%%%%%%%%%%%%%%%%%%%%%%%%%%%%%%%%%%%%%%%%%%%%%%%%%%%%%%%%%%%%%%%%%%
%%%%%%%%%%%%%%%%%%%%%%%%%%%%%%%%%%%%%%%%%%%%%%%%%%%%%%%%%%%%%%%%%%%%%%%%%%%%%%%%%%%%%%%%%%%%%%%%%%%%

\section{Equivalence between $\Sn^1 \times \Sn^1$ and $T^2$}
\paragraph*{Left-to-right}
We need to show that for any $x,y : \Sn^1$ we have $G(F(x,y)) = (x,y)$. We do this by circle induction on the first argument. We need a path family $\epsilon : \prd{y:\Sn^1} G(f(y)) = (\base,y)$. The definition of $\epsilon$ itself proceeds by circle induction: we map $\base$ to the path $\refl{(\base,\base)}$ and $\lloop$ to the path
\begin{center}
\begin{tikzpicture}
\node (N0) at (0,3.4) {$\transfib{z \mapsto G(f(z)) = (\base,z)}{\lloop}{\refl{(\base,\base)}}$};
\node (N1) at (0,1.7) {$\opp{\ap{G}{\ap{f}{\lloop}}} \ct \refl{(\base,\base)} \ct \pairpath(\refl{\base},\lloop)$};
\node (N2) at (0,0) {$\refl{(\base,\base)}$};
\draw[-] (N0) -- node[right]{\footnotesize $\T_2(\lloop,\refl{(\base,\base)})$} (N1);
\draw[-] (N1) -- node[right]{\footnotesize $\I(\delta)$} (N2);
\end{tikzpicture}
\end{center}
where for any $\alpha : x =_{\Sn^1} y$ and $u : G(f(x)) = (\base,x)$, the path \[\T_2(\alpha,u) : \transfib{z \mapsto G(f(z)) = (\base,z)}{\alpha}{u} = \opp{\ap{G}{\ap{f}{\alpha}}} \ct u \ct \pairpath(\refl{\base},\alpha) \]
is defined by path induction on $\alpha$ and $\delta$ is the path
\begin{center}
\begin{tikzpicture}
\node (N0) at (0,6) {$\ap{G}{\ap{f}{\lloop}} \ct \refl{(\base,\base)}$};
\node (N1) at (0,4.5) {$\ap{G}{\ap{f}{\lloop}}$};
\node (N2) at (0,3) {$\ap{G}{\pT}$};
\node (N3) at (0,1.5) {$\pairpath(\refl{\base},\lloop)$};
\node (N4) at (0,0) {$\refl{(\base,\base)} \ct \pairpath(\refl{\base},\lloop)$};
\draw[-] (N0) -- node[right]{} (N1);
\draw[-] (N1) -- node[right]{\footnotesize via $\beta_f$} (N2);
\draw[-] (N2) -- node[right]{\footnotesize $\beta^p_G$} (N3);
\draw[-] (N3) -- node[right]{} (N4);
\end{tikzpicture}
\end{center}
This finishes the definition of $\epsilon$. We now need to prove that \[\transfib{x \mapsto \Pi(y:\Sn^1) G(F(x,y)) = (x,y)}{\lloop}{\epsilon} = \epsilon\] 
By function extensionality, it suffices to show that for any $y : \Sn^1$ we have
\[\transfib{x \mapsto \Pi(z:\Sn^1) G(F(x,z)) = (x,z)}{\lloop}{\epsilon} \; y = \epsilon(y)\] 
Straightforward path induction shows that for any $\alpha : \base =_{\Sn^1} x$, we have\[\transfib{w \mapsto \Pi(z:\Sn^1) G(F(w,z)) = (w,z)}{\alpha}{\epsilon} \; y = \opp{\ap{G}{\happly(\ap{F^\to}{\alpha},y)}} \ct \epsilon(y) \ct \pairpath(\alpha,\refl{y}) \] 
It thus suffices to show that
\[ \ap{G}{\happly(\ap{F^\to}{\lloop},y)} \ct \epsilon(y) = \epsilon(y) \ct \pairpath(\lloop,\refl{y})\]
After simplifying the left endpoint using $\happly(\beta^*_{F^\to},y)$ it suffices to show that
\[ \ap{G}{H(y)} \ct \epsilon(y) = \epsilon(y) \ct \pairpath(\lloop,\refl{y}) \]
for any $y :\Sn^1$. We proceed yet again by circle induction. We map $\base$ to the path $\eta$ below:
\begin{center}
\begin{tikzpicture}
\node (N0) at (0,3) {$\ap{G}{\qT} \ct \refl{(\base,\base)}$};
\node (N1) at (0,1.5) {$\ap{G}{\qT}$};
\node (N2) at (0,0) {$\pairpath(\lloop,\refl{\base})$};
\node (N3) at (0,-1.5) {$\refl{(\base,\base)} \ct \pairpath(\lloop,\refl{\base})$};
\draw[-] (N0) -- node[right]{\footnotesize} (N1);
\draw[-] (N1) -- node[right]{\footnotesize $\beta^q_G$} (N2);
\draw[-] (N2) -- node[right]{\footnotesize} (N3);
\end{tikzpicture}
\end{center}
All that now remains to show is
\[ \transfib{y \mapsto \ap{G}{H(y)} \ct \epsilon(y) = \epsilon(y) \ct \pairpath(\lloop,\refl{y})}{\lloop}{\eta} = \eta \]
However, this follows at once from the fact that the circle $\Sn^1$, and hence the product $\Sn^1 \times \Sn^1$, is a 1-type (as shown \emph{e.g.}, by Licata and Shulman in \cite{licata_shulman}): this means that for any two points $x,y : \Sn^1 \times \Sn^1$, any two paths $\alpha,\alpha' : x = y$, and any two higher paths $\gamma, \gamma' : \alpha = \alpha'$, we necessarily have $\gamma = \gamma'$.  

\paragraph{Right-to-left}
We need to show that for any $x:T^2$ we have $F(G(x)) = x$. We use torus induction with $\baseT' \defeq \refl{\baseT}$. We let $p'$ be the path
\begin{center}
\begin{tikzpicture}
\node (N1) at (0,9.9)  {$\transfib{x \mapsto F(G(x))=x}{\pT}{\refl{\baseT}}$};
\node (N2) at (0,8.25) {$\opp{\ap{F}{\ap{G}{\pT}}} \ct \refl{\baseT} \ct \pT$};
\node (N3) at (0,6.6) {$\refl{\baseT}$};
\draw[-] (N1) -- node[right]{\footnotesize $\T_3(\pT,\refl{\baseT})$} (N2);
\draw[-] (N2) -- node[right]{\footnotesize $\I(\kappa_p)$} (N3);
\end{tikzpicture}
\end{center}
where for any $\alpha : x =_{T^2} y$ and $u : F(G(x))=x$, the path \[\T_3(\alpha,u) : \transfib{x \mapsto F(G(x))=x}{\alpha}{u} = \opp{\ap{F}{\ap{G}{\alpha}}} \ct u \ct \alpha \]
is defined by path induction on $\alpha$, and $\kappa_p$ is the path in Fig.~\ref{fig:1}.
\begin{figure}
\begin{center}
\begin{tikzpicture}
\node (N2) at (0,8.25) {$\ap{F}{\ap{G}{\pT}} \ct \refl{\baseT}$};
\node (N3) at (0,6.6)  {$\ap{F}{\ap{G}{\pT}}$};
\node (N4) at (0,4.95) {$\ap{F}{\pairpath(\refl{\base},\lloop)}$};
\node (N5) at (0,3.3)  {$\ap{f}{\lloop}$};
\node (N6) at (0,1.65) {$\pT$};
\node (N7) at (0,0) {$\refl{\baseT} \ct \pT$};
\draw[-] (N2) -- node[right]{\footnotesize} (N3);
\draw[-] (N3) -- node[right]{\footnotesize via $\beta^p_G$} (N4);
\draw[-] (N4) -- node[right]{\footnotesize $\mu_\base(\lloop)$} (N5);
\draw[-] (N5) -- node[right]{\footnotesize $\beta_f$} (N6);
\draw[-] (N6) -- node[right]{\footnotesize} (N7);
\node at (0,-1) {\textbf{a)} $\kappa_p$};

\node (N2') at (7,8.25) {$\ap{F}{\ap{G}{\qT}} \ct \refl{\baseT}$};
\node (N3') at (7,6.6)  {$\ap{F}{\ap{G}{\qT}}$};
\node (N4') at (7,4.95) {$\ap{F}{\pairpath(\lloop,\refl{\base})}$};
\node (N5') at (7,3.3)  {$\happly(\ap{F^\to}{\lloop},\base)$};
\node (N6') at (7,1.65) {$\qT$};
\node (N7') at (7,0)   {$\refl{\baseT} \ct \qT$};
\draw[-] (N2') -- node[right]{\footnotesize} (N3');
\draw[-] (N3') -- node[right]{\footnotesize via $\beta^q_G$} (N4');
\draw[-] (N4') -- node[right]{\footnotesize $\nu_\base(\lloop)$} (N5');
\draw[-] (N5') -- node[right]{\footnotesize $\happly(\beta^*_{F^\to},\base)$} (N6');
\draw[-] (N6') -- node[right]{\footnotesize } (N7');
\node at (7,-1) {\textbf{b)} $\kappa_q$};
\end{tikzpicture}
\end{center}
\caption{The paths $\kappa_p$ and $\kappa_q$}
\label{fig:1}
\end{figure}
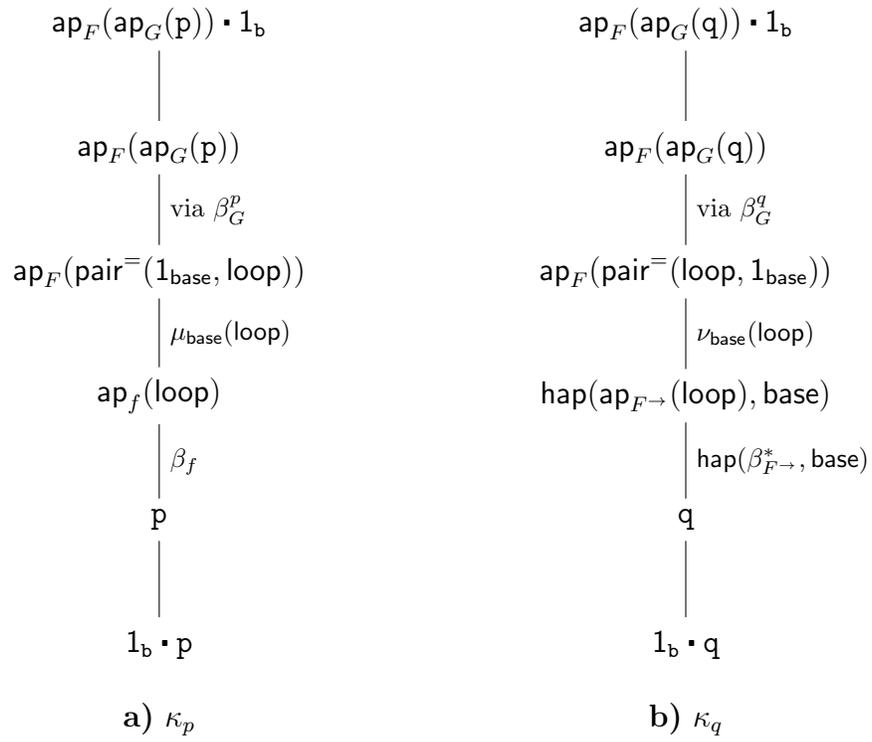
Similarly, let $q'$ be the path
\begin{center}
\begin{tikzpicture}
\node (N1) at (0,9.9)  {$\transfib{x \mapsto F(G(x))=x}{\qT}{\refl{\baseT}}$};
\node (N2) at (0,8.25) {$\opp{\ap{F}{\ap{G}{\qT}}} \ct \refl{\baseT} \ct \qT$};
\node (N3) at (0,6.6)   {$\refl{\baseT}$};
\draw[-] (N1) -- node[right]{\footnotesize $\T_3(\qT,\refl{\baseT})$} (N2);
\draw[-] (N2) -- node[right]{\footnotesize $\I(\kappa_q)$} (N3);
\end{tikzpicture}
\end{center}\newpage

All that remains now is to show that the following diagram commutes:
\begin{center}
\scalebox{0.95}{
\begin{tikzpicture}
\node (N1) at (0,9.9)  {\small $\transfib{x \mapsto F(G(x))=x}{\pT \ct \qT}{\refl{\baseT}}$};
\node (N2) at (0,8.25) {\small $\transfib{x \mapsto F(G(x))=x}{\qT}{\transfib{z \mapsto F(G(x))=x}{\pT}{\refl{\baseT}}}$};
\node (N3) at (0,6.6)  {\small $\transfib{x \mapsto F(G(x))=x}{\qT}{\refl{\baseT}}$};
\node (N4) at (4.25,4.95) {\small $\refl{\baseT}$};

\node (N5) at (8,9.9)  {\small $\transfib{x \mapsto F(G(x))=x}{\qT \ct \pT}{\refl{\baseT}}$};
\node (N6) at (8,8.25) {\small $\transfib{x \mapsto F(G(x))=x}{\pT}{\transfib{z \mapsto F^\times(G(z))=z}{\qT}{\refl{\baseT}}}$};
\node (N7) at (8,6.6)  {\small $\transfib{x \mapsto F(G(x))=x}{\pT}{\refl{\baseT}}$};

\draw[-] (N1) -- node[above]{\small via $\tT$} (N5);
\draw[-] (N1) -- node[right]{\footnotesize} (N2);
\draw[-] (N2) -- node[right]{\small via $p'$} (N3);
\draw[-] (N3) -- node[below]{\small $q'$} (N4);
\draw[-] (N5) -- node[right]{\footnotesize} (N6);
\draw[-] (N6) -- node[right]{\small via $q'$} (N7);
\draw[-] (N7) -- node[below]{\small $p'$} (N4);
\end{tikzpicture}}
\end{center}
\noindent We proceed in four steps.

\paragraph{Step 1}
For terms $\alpha_1 : x =_{T^2} y$, $\alpha_2 : y =_{T^2} z$, $\alpha'_1 : a =_{T^2} b$, $\alpha'_2 : b =_{T^2} c$,
$u_x : F(G(x))=a$, $u_y : F(G(y))=b$, $u_z : F(G(z))=c$, $\eta_1 : \ap{F}{\ap{G}{\alpha_1}} \ct u_y = u_x \ct \alpha'_1$, $\eta_2 : \ap{F}{\ap{G}{\alpha_2}} \ct u_z = u_y \ct \alpha'_2$,
let $\zeta(\alpha_1,\alpha_2,\alpha'_1,\alpha'_2,u_x,u_y,u_z,\eta_1,\eta_2)$ be the path in Fig.~\ref{fig:2}.
\begin{figure}
\begin{center}
\begin{tikzpicture}
\node (N1) at (0,10.2)  {$\ap{F}{\ap{G}{\alpha_1 \ct \alpha_2}} \ct u_z$};
\node (N2) at (0,8.5) {$\Big(\ap{F}{\ap{G}{\alpha_1}} \ct \ap{F}{\ap{G}{\alpha_2}}\Big) \ct u_z$};
\node (N3) at (0,6.8) {$\ap{F}{\ap{G}{\alpha_1}} \ct \Big(\ap{F}{\ap{G}{\alpha_2}} \ct u_z\Big)$};
\node (N4) at (0,5.1) {$\ap{F}{\ap{G}{\alpha_1}} \ct (u_y \ct \alpha_2')$};
\node (N5) at (0,3.4)  {$\Big(\ap{F}{\ap{G}{\alpha_1}} \ct u_y\Big) \ct \alpha_2'$};
\node (N6) at (0,1.7) {$(u_x \ct \alpha'_1) \ct \alpha'_2$};
\node (N7) at (0,0)  {$u_x \ct (\alpha'_1 \ct \alpha'_2)$};

\draw[-] (N1) -- node[right]{} (N2);
\draw[-] (N2) -- node[right]{\footnotesize} (N3);
\draw[-] (N3) -- node[right]{\small via $\eta_2$} (N4);
\draw[-] (N4) -- node[right]{\footnotesize} (N5);
\draw[-] (N5) -- node[right]{\small via $\eta_1$} (N6);
\draw[-] (N6) -- node[right]{\footnotesize} (N7);
\end{tikzpicture}
\end{center}
\caption{The path $\zeta(\alpha_1,\alpha_2,\alpha'_1,\alpha'_2,u_x,u_y,u_z,\eta_1,\eta_2)$}
\label{fig:2}
\end{figure}

Now for $\alpha_1 : x =_{T^2} y$, $\alpha_2 : y =_{T^2} z$, $u_x : F(G(x))=x$, $u_y : F(G(y))=y$, $u_z : F(G(z))=z$, $\eta_1 : \ap{F}{\ap{G}{\alpha_1}} \ct u_ y = u_x \ct \alpha_1$, $\eta_2 : \ap{F}{\ap{G}{\alpha_2}} \ct u_z = u_y \ct \alpha_2$, we claim the path
\begin{center}
\begin{tikzpicture}
\node (N1) at (0,5.1)  {$\transfib{w \mapsto F(G(w))=w}{\alpha_1 \ct \alpha_2}{u_x}$};
\node (N2) at (0,3.4) {$\transfib{w \mapsto F(G(w))=w}{\alpha_2}{\transfib{w \mapsto F(G(w))=w}{\alpha_1}{u_x}}$};
\node (N3) at (0,1.7) {$\transfib{w \mapsto F(G(w))=w}{\alpha_2}{u_y}$};
\node (N4) at (0,0) {$u_z$};

\draw[-] (N1) -- node[right]{\small} (N2);
\draw[-] (N2) -- node[right]{\small via $\T_3(\alpha_1,u_x) \ct \I(\eta_1)$} (N3);
\draw[-] (N3) -- node[right]{\small $\T_3(\alpha_2,u_y) \ct \I(\eta_2)$} (N4);
\end{tikzpicture}
\end{center}
is equal to the path
\begin{center}
\begin{tikzpicture}
\node (N1) at (0,3.4)  {$\transfib{w \mapsto F(G(w))=w}{\alpha_1 \ct \alpha_2}{u_x}$};
\node (N2) at (0,1.7) {$\opp{\ap{F}{\ap{G}{\alpha_1 \ct \alpha_2}}} \ct u_x \ct (\alpha_1 \ct \alpha_2)$};
\node (N3) at (0,0) {$u_z$};

\draw[-] (N1) -- node[right]{\small $\T_3(\alpha_1\ct \alpha_2,u_x)$} (N2);
\draw[-] (N2) -- node[right]{\small via $\I(\zeta(\alpha_1,\alpha_2,\alpha_1,\alpha_2,u_x,u_y,u_z,\eta_1,\eta_2))$} (N3);
\end{tikzpicture}
\end{center}

To show this, we proceed by path induction on $\alpha_1$ and $\alpha_2$. Hence we have to establish the claim for $\alpha_1 := \refl{x}$, $\alpha_2 := \refl{x}$, $u_x,u_y,u_z : F(G(x))=x$, and $\eta_1 : \refl{F(G(x))} \ct u_y = u_x \ct \refl{x}$, $\eta_2 : \refl{F(G(x))} \ct u_z = u_y \ct \refl{x}$. 

We note, however, that the types of $\eta_1,\eta_2$ are equivalent to $u_x = u_y$ and $u_y = u_z$ respectively. Hence it suffices to
show that given $u_x,u_y,u_z : F(G(x))=x$, $\eta_1' : u_x = u_y$, $\eta_2' : u_y = u_z$, we can establish the claim for the special case when $\eta_1$ and $\eta_2$ have been obtained from $\eta_1'$ and $\eta_2'$, respectively, by using the aforementioned equivalences.

But we can now perform path induction on $\eta_1'$ and $\eta_2'$, leaving us with $u_x : F(G(x)) = x$ and $\eta_1' := \refl{u_x}$, $\eta_2' := \refl{u_x}$. We finish the proof by generalizing the endpoints of $u_x$ and performing a final path induction. \bigskip

By what we have just shown, it suffices to prove that the following diagram commutes:

\begin{center}
\begin{tikzpicture}
\node (N1) at (0,4)  {\small $\transfib{w \mapsto F(G(w))=w}{\pT \ct \qT}{\refl{\baseT}}$};
\node (N2) at (0,2) {\small $\opp{\ap{F}{\ap{G}{\pT \ct \qT}}} \ct \refl{\baseT} \ct (\pT \ct \qT)$};
\node (N3) at (4,0) {\small $\refl{\baseT}$};

\node (N4) at (8,4)  {\small $\transfib{x \mapsto F(G(w))=w}{\qT \ct \pT}{\refl{\baseT}}$};
\node (N5) at (8,2) {\small $\opp{\ap{F}{\ap{G}{\qT \ct \pT}}} \ct \refl{\baseT} \ct (\qT \ct \pT)$};

\draw[-] (N1) -- node[above]{\small via $\tT$} (N4);
\draw[-] (N1) -- node[right]{\small $\T_3(\pT \ct \qT,\refl{\baseT})$} (N2);
\draw[-] (N4) -- node[right]{\small $\T_3(\qT \ct \pT,\refl{\baseT})$} (N5);
\draw[-] (N2) -- node[below]{\small $\I(\zeta(\pT,\qT,\pT,\qT,\refl{\baseT},\refl{\baseT},\refl{\baseT},\kappa_p,\kappa_q))\;\;\;\;\;\;\;\;\;\;\;\;\;\;\;\;\;\;\;\;\;\;\;\;\;\;\;\;\;\;\;\;\;\;\;\;\;\;\;\;\;\;\;\;\;\;\;\;\;\;\;\;\;\;$} (N3);
\draw[-] (N5) -- node[below]{\small $\;\;\;\;\;\;\;\;\;\;\;\;\;\;\;\;\;\;\;\;\;\;\;\;\;\;\;\;\;\;\;\;\;\;\;\;\;\;\;\;\;\;\;\;\;\;\;\;\;\;\;\;\;\;\I(\zeta(\qT,\pT,\qT,\pT,\refl{\baseT},\refl{\baseT},\refl{\baseT},\kappa_q,\kappa_p))$} (N3);
\end{tikzpicture}
\end{center}

\paragraph{Step 2}
We observe the following: given terms $\alpha,\alpha' : x =_{T^2} y$, $\theta : \alpha = \alpha'$, $u_x : F(G(x))=x$, $u_y : F(G(y))=y$, $\eta : \ap{F}{\ap{G}{\alpha}} \ct u_ y = u_x \ct \alpha$, and $\eta' : \ap{F}{\ap{G}{\alpha'}} \ct u_y = u_x \ct \alpha'$, the commutativity of the diagram
\begin{center}
\begin{tikzpicture}
\node (N1) at (0,4)  {\small $\transfib{w \mapsto F(G(w))=w}{\alpha}{u_x}$};
\node (N2) at (0,2) {\small $\opp{\ap{F}{\ap{G}{\alpha}}} \ct u_x \ct \alpha$};
\node (N3) at (4,0) {\small $u_y$};

\node (N4) at (8,4)  {\small $\transfib{w \mapsto F(G(w))=w}{\alpha'}{u_x}$};
\node (N5) at (8,2) {\small $\opp{\ap{F}{\ap{G}{\alpha'}}} \ct u_x \ct \alpha'$};

\draw[-] (N1) -- node[above]{\small via $\theta$} (N4);
\draw[-] (N1) -- node[right]{\small $\T_3(\alpha,u_x)$} (N2);
\draw[-] (N4) -- node[right]{\small $\T_3(\alpha',u_x)$} (N5);
\draw[-] (N2) -- node[below]{\small $\I(\eta)\;\;\;\;$} (N3);
\draw[-] (N5) -- node[below]{\small $\;\;\;\; \I(\eta')$} (N3);
\end{tikzpicture}
\end{center}
is equivalent to the commutativity of the diagram
\begin{center}
\begin{tikzpicture}
\node (N1) at (0,2) {\small $\ap{F}{\ap{G}{\alpha}} \ct u_y$};
\node (N2) at (0,0) {\small $u_x \ct \alpha$};
\node (N3) at (6,2) {\small $\ap{F}{\ap{G}{\alpha'}} \ct u_y$};
\node (N4) at (6,0) {\small $u_x \ct \alpha'$};

\draw[-] (N1) -- node[above]{\footnotesize via $\theta$} (N3);
\draw[-] (N1) -- node[left]{\footnotesize $\eta$} (N2);
\draw[-] (N2) -- node[below]{\footnotesize via $\theta$} (N4);
\draw[-] (N3) -- node[right]{\footnotesize $\eta'$} (N4);
\end{tikzpicture}
\end{center}

To show this, we proceed by path induction on $\theta$ and a subsequent path induction on $\alpha$. After simplifying it remains to prove that for $u_x,u_y : F(G(x))=x$, $\eta,\eta' : \refl{F(G(x))} \ct u_y = u_x \ct \refl{x}$, we have 
$(\I(\eta) = \I(\eta')) \simeq (\eta = \eta')$. But this follows since $\I$ is an equivalence. \bigskip

By what we have just shown, it suffices to prove that the following diagram commutes:
\begin{center}
\begin{tikzpicture}
\node (N1) at (0,2) {\small $\ap{F}{\ap{G}{\pT \ct \qT}} \ct \refl{\baseT}$};
\node (N2) at (0,0) {\small $\refl{\baseT} \ct (\pT \ct \qT)$};
\node (N3) at (5.5,2) {\small $\ap{F}{\ap{G}{\qT \ct \pT}} \ct \refl{\baseT}$};
\node (N4) at (5.5,0) {\small $\refl{\baseT} \ct (\qT \ct \pT)$};

\draw[-] (N1) -- node[above]{\footnotesize via $\tT$} (N3);
\draw[-] (N1) -- node[left]{\small $\zeta(\pT,\qT,\pT,\qT,\refl{\baseT},\refl{\baseT},\refl{\baseT},\kappa_p,\kappa_q)$} (N2);
\draw[-] (N2) -- node[below]{\footnotesize via $\tT$} (N4);
\draw[-] (N3) -- node[right]{\small $\zeta(\qT,\pT,\qT,\pT,\refl{\baseT},\refl{\baseT},\refl{\baseT},\kappa_q,\kappa_p)$} (N4);
\end{tikzpicture}
\end{center}

\paragraph{Step 3}
We observe the following: for $k \in \{1,2 \}$ and $x_1,x_2,x_3 : T^2$ let terms $\alpha^1_k : x_k = x_{k+1}$; $\alpha^2_k : G(x_k) = G(x_{k+1})$; $\alpha^3_k, \alpha^4_k : F(G(x_k)) = F(G(x_{k+1}))$; $\iota^1_k : \ap{G}{\alpha^1_k} = \alpha^2_k$; $\iota_k^2 : \ap{F}{\alpha^2_k} = \alpha^3_k$; $\iota^3_k : \alpha^3_k = \alpha^4_k$ be given. Then the path
\begin{center}
\begin{tikzpicture}
\node (N1) at (0,10.2)  {$\ap{F}{\ap{G}{\alpha^1_1 \ct \alpha_2^1}} \ct \refl{F(G(x_3))}$};
\node (N2) at (0,8.5)  {$\ap{F}{\ap{G}{\alpha^1_1 \ct \alpha^1_2}}$};
\node (N3) at (0,6.8) {$\mathtt{ap}_F\Big(\ap{G}{\alpha^1_1} \ct \ap{G}{\alpha^1_2}\Big)$};
\node (N4) at (0,5.1) {$\ap{F}{\alpha^2_1 \ct \alpha^2_2}$};
\node (N5) at (0,3.4)  {$\ap{F}{\alpha^2_1} \ct \ap{F}{\alpha^2_2}$};
\node (N6) at (0,1.7) {$\alpha_1^3 \ct \alpha_2^3$};
\node (N7) at (0,0)  {$\alpha^4_1 \ct \alpha^4_2$};
\node (N8) at (0,-1.7)  {$\refl{F(G(x_1))} \ct (\alpha^4_1 \ct \alpha^4_2)$};

\draw[-] (N1) -- node[right]{} (N2);
\draw[-] (N2) -- node[right]{\footnotesize} (N3);
\draw[-] (N3) -- node[right]{\small via $\iota^1_1, \iota^1_2$} (N4);
\draw[-] (N4) -- node[right]{\footnotesize} (N5);
\draw[-] (N5) -- node[right]{\small via $\iota^2_1, \iota^2_2$} (N6);
\draw[-] (N6) -- node[right]{\small via $\iota^3_1, \iota^3_2$} (N7);
\draw[-] (N7) -- node[right]{\footnotesize} (N8);
\end{tikzpicture}
\end{center}
is equal to the path $\zeta\Big(\alpha^1_1,\alpha^1_2,\alpha^4_1,\alpha^4_2,\refl{F(G(x_1))},\refl{F(G(x_2))},\refl{F(G(x_3))},\eta_1,\eta_2\Big)$ where $\eta_k$ is the path
\begin{center}
\begin{tikzpicture}
\node (N2) at (0,8.25) {$\ap{F}{\ap{G}{\alpha^1_k}} \ct \refl{F(G(x_{k+1}))}$};
\node (N3) at (0,6.6)  {$\ap{F}{\ap{G}{\alpha^1_k}}$};
\node (N4) at (0,4.95) {$\ap{F}{\alpha^2_k}$};
\node (N5) at (0,3.3)  {$\alpha^3_k$};
\node (N6) at (0,1.65) {$\alpha^4_k$};
\node (N7) at (0,0)   {$\refl{F(G(x_k))} \ct \alpha^4_k$};
\draw[-] (N2) -- node[right]{\footnotesize} (N3);
\draw[-] (N3) -- node[right]{\footnotesize via $\iota^1_k$} (N4);
\draw[-] (N4) -- node[right]{\footnotesize $\iota^2_k$} (N5);
\draw[-] (N5) -- node[right]{\footnotesize $\iota^3_k$} (N6);
\draw[-] (N6) -- node[right]{\footnotesize } (N7);
\end{tikzpicture}
\end{center}

To show this, we proceed by path induction (with one endpoint fixed) on $\iota^1_k,\iota^2_k,\iota^3_k$ and a subsequent path induction on $\alpha^1_k$.\bigskip

By what we have just shown, it suffices to prove that the outer rectangle in the following diagram commutes:

\begin{center}
\scalebox{0.89}{
\begin{tikzpicture}
\node at (4.8,14.7) {A};
\node at (4.8,11.5) {B};
\node at (4.8,7) {C};
\node at (4.8,3.25) {D};
\node at (4.8,0.8) {E};

\node (N0) at (0,15.5) {\small $\ap{F}{\ap{G}{\pT \ct \qT}} \ct \refl{}$};
\node (N1) at (0,13.5) {\small $\ap{F}{\ap{G}{\pT \ct \qT}}$};
\node (N2) at (0,11.5) {\small $\mathtt{ap}_F\Big(\ap{G}{\pT} \ct \ap{G}{\qT}\Big)$};
\node (N3) at (0,9) {\small $\mathtt{ap}_F\Big(\pairpath(\refl{},\lloop) \ct \pairpath(\lloop,\refl{})\Big)$};
\node (N4) at (0,7) {\small $\ap{F}{\pairpath(\refl{},\lloop)} \ct \ap{F}{\pairpath(\lloop,\refl{})}$};
\node (N5) at (0,4.5) {\small $\ap{f}{\lloop} \ct \happly(\ap{F^\to}{\lloop},\base)$};
\node (N6) at (0,2) {\small $\pT \ct \qT$};
\node (N7) at (0,0)    {\small $\refl{} \ct (\pT \ct \qT)$};

\node (N10) at (9.6,15.5) {\small $\ap{F}{\ap{G}{\qT \ct \pT}} \ct \refl{}$};
\node (N11) at (9.6,13.5) {\small $\ap{F}{\ap{G}{\qT \ct \pT}}$};
\node (N12) at (9.6,11.5) {\small $\mathtt{ap}_F\Big(\ap{G}{\qT} \ct \ap{G}{\pT}\Big)$};
\node (N13) at (9.6,9) {\small $\mathtt{ap}_F\Big(\pairpath(\lloop,\refl{}) \ct \pairpath(\refl{},\lloop)\Big)$};
\node (N14) at (9.6,7) {\small $\ap{F}{\pairpath(\lloop,\refl{})} \ct \ap{F}{\pairpath(\refl{},\lloop)}$};
\node (N15) at (9.6,4.5) {\small $\happly(\ap{F^\to}{\lloop},\base) \ct \ap{f}{\lloop}$};
\node (N16) at (9.6,2) {\small $\qT \ct \pT$};
\node (N17) at (9.6,0)    {\small $\refl{} \ct (\qT \ct \pT)$};

\draw[-] (N0) -- node[right]{\scriptsize} (N1);
\draw[-] (N1) -- node[right]{\scriptsize} (N2);
\draw[-] (N2) -- node[right]{\small via $\beta^p_G$, $\beta^q_G$} (N3);
\draw[-] (N3) -- node[right]{\scriptsize} (N4);
\draw[-] (N4) -- node[right]{\small via $\mu_\base(\lloop)$, $\nu_\base(\lloop)$} (N5);
\draw[-] (N5) -- node[right]{\small via $\beta_f, \happly(\beta^*_{F^\to},\base)$} (N6);
\draw[-] (N6) -- node[right]{\scriptsize} (N7);
\draw[-] (N10) -- node[right]{\scriptsize} (N11);
\draw[-] (N11) -- node[right]{\scriptsize} (N12);
\draw[-] (N12) -- node[left]{\small  via $\beta^q_G$, $\beta^p_G$} (N13);
\draw[-] (N13) -- node[right]{\scriptsize} (N14);
\draw[-] (N14) -- node[left]{\small via $\nu_\base(\lloop)$, $\mu_\base(\lloop)$} (N15);
\draw[-] (N15) -- node[left]{\small via $\happly(\beta^*_{F^\to},\base),\beta_f$} (N16);
\draw[-] (N16) -- node[right]{\scriptsize} (N17);
\draw[-] (N0) -- node[above]{\small via $\tT$} (N10);
\draw[-] (N1) -- node[above]{\small via $\tT$} (N11);
\draw[-] (N7) -- node[below]{\small via $\tT$} (N17);
\draw[-] (N3) -- node[above]{\small via $\Phi_{\lloop,\lloop}$} (N13);
\draw[-] (N5) -- node[below]{\small $\nathom_{\happly(\ap{F^\to}{\lloop})}(\lloop)$} (N15);
\draw[-] (N6) -- node[below]{\small $\tT$} (N16);
\end{tikzpicture}}
\end{center}

\paragraph{Step 4}
It suffices to prove that each of the inner rectangles commutes. Rectangles A and E commute obviously. Rectangle B is just diagram (2)
transported along $\mathtt{ap}_F$, and hence commutes. Rectangle C commutes by the following generalization: for any $\alpha : x =_{\Sn^1} y$, the diagram below commutes by path induction on $\alpha$:
\begin{center}
\scalebox{0.90}{
\begin{tikzpicture}

\node (N3) at (0,4) {\small $\mathtt{ap}_F\Big(\pairpath(\refl{x},\alpha) \ct \pairpath(\alpha,\refl{y})\Big)$};
\node (N4) at (0,2) {\small $\ap{F}{\pairpath(\refl{x},\alpha)} \ct \ap{F}{\pairpath(\alpha,\refl{y})}$};
\node (N5) at (0,0) {\small $\ap{F^\to(x)}{\alpha} \ct \happly(\ap{F^\to}{\alpha},y)$};

\node (N13) at (9.6,4) {\small $\mathtt{ap}_F\Big(\pairpath(\alpha,\refl{x}) \ct \pairpath(\refl{y},\alpha)\Big)$};
\node (N14) at (9.6,2) {\small $\ap{F}{\pairpath(\alpha,\refl{x})} \ct \ap{F}{\pairpath(\refl{y},\alpha)}$};
\node (N15) at (9.6,0) {\small $\happly(\ap{F^\to}{\alpha},x) \ct \ap{F^\to(y)}{\alpha}$};

\draw[-] (N3) -- node[right]{\scriptsize} (N4);
\draw[-] (N4) -- node[right]{\small via $\mu_x(\alpha)$, $\nu_y(\alpha)$} (N5);
\draw[-] (N13) -- node[right]{\scriptsize} (N14);
\draw[-] (N14) -- node[left]{\small via $\nu_x(\alpha)$, $\mu_y(\alpha)$} (N15);
\draw[-] (N3) -- node[above]{\small via $\Phi_{\alpha,\alpha}$} (N13);
\draw[-] (N5) -- node[below]{\small $\nathom_{\happly(\ap{F^\to}{\alpha})}(\alpha)$} (N15);
\end{tikzpicture}}
\end{center}
It remains to show that rectangle D commutes. Consider the following diagram:
\begin{center}
\scalebox{0.933}{
\begin{tikzpicture}
\node at (4.8,3.55) {D$_1$};
\node at (4.8,1.05) {D$_2$};
\node (N5) at (0,5) {\small $\ap{f}{\lloop} \ct \happly(\ap{F^\to}{\lloop},\base)$};
\node (N6) at (0,2.5) {\small $\ap{f}{\lloop} \ct \qT$};
\node (N7) at (0,0) {\small $\pT \ct \qT$};

\node (N15) at (9.6,5) {\small $\happly(\ap{F^\to}{\lloop},\base) \ct \ap{f}{\lloop}$};
\node (N16) at (9.6,2.5) {\small $\qT \ct \ap{f}{\lloop}$};
\node (N17) at (9.6,0) {\small $\qT \ct \pT$};

\draw[-] (N5) -- node[right]{\small via $\happly(\beta^*_{F^\to},\base)$} (N6);
\draw[-] (N6) -- node[right]{\small via $\beta_f$} (N7);
\draw[-] (N15) -- node[left]{\small via $\happly(\beta^*_{F^\to},\base)$} (N16);
\draw[-] (N16) -- node[left]{\small via $\beta_f$} (N17);
\draw[-] (N7) -- node[below]{\small $\tT$} (N17);
\draw[-] (N5) -- node[below]{\small $\nathom_{\happly(\ap{F^\to}{\lloop})}(\lloop)$} (N15);
\draw[-] (N6) -- node[below]{\small $\nathom_H(\lloop)$} (N16);
\end{tikzpicture}}
\end{center}
Commutativity of the outer rectangle clearly implies the commutativity of D. It thus remains to show that D$_1$ and D$_2$ commute. The rectangle D$_2$ is precisely diagram (1), which commutes. Rectangle D$_1$ commutes by the following generalization: let $\gamma : h_1 =_{f \sim f} h_2$ and $\alpha : x =_{\Sn^1} y$ be given. Then the following diagram commutes by path induction on $\gamma$ and $\alpha$:
\begin{center}
\begin{tikzpicture}
\node (N5) at (0,2.5) {$\ap{f}{\alpha} \ct h_1(y)$};
\node (N6) at (0,0) {$\ap{f}{\alpha} \ct h_2(y)$};
\node (N15) at (6,2.5) {$h_1(x) \ct \ap{f}{\alpha}$};
\node (N16) at (6,0) {$h_2(x) \ct \ap{f}{\alpha}$};

\draw[-] (N5) -- node[left]{via $\happly(\gamma,y)$} (N6);
\draw[-] (N15) -- node[right]{via $\happly(\gamma,x)$} (N16);
\draw[-] (N5) -- node[above]{$\nathom_{h_1}(\alpha)$} (N15);
\draw[-] (N6) -- node[below]{$\nathom_{h_2}(\alpha)$} (N16);
\end{tikzpicture}
\end{center}
This finishes the proof.

%%%%%%%%%%%%%%%%%%%%%%%%%%%%%%%%%%%%%%%%%%%%%%%%%%%%%%%%%%%%%%%%%%%%%%%%%%%%%%%%%%%%%%%%%%
%%%%%%%%%%%%%%%%%%%%%%%%%%%%%%%%%%%%%%%%%%%%%%%%%%%%%%%%%%%%%%%%%%%%%%%%%%%%%%%%%%%%%%%%%%

\section{Conclusion}

We have presented a homotopy-type theoretic proof that the torus $T^2$ is equivalent to the product of two circles $\Sn^1 \times \Sn^1$. To compare the proof described here to the one given by Licata and Brunerie in \cite{licata_brunerie}, we first note that the definitions of the back-and-forth functions between $T^2$ and $\Sn^1 \times \Sn^1$ are exactly the same. When proving that the functions compose to the identity on $\Sn^1 \times \Sn^1$ we used the fact that the circle $\Sn^1$ is a 1-type. This simplification is not used by Licata and Brunerie; the lines \emph{75-76, 82-86} in \cite{licata_agda1} comprise the path algebra which would be avoided by the aforementioned simplification. On the other hand, in this fashion Agda is able to automatically infer the terms $\mathsf{loop1}$-$\mathsf{case}$ and $\mathsf{loop2}$-$\mathsf{case}$, which in our notation correspond to the paths $\eta$ and $\delta$ respectively (of course a paper proof offers no such opportunity).

Similarly, when proving that the functions compose to the identity on $\T^2$, the terms $\mathsf{p}$-$\mathsf{case}$ and $\mathsf{q}$-$\mathsf{case}$, which in our proof correspond to the paths $\kappa_q$ and $\kappa_p$, are inferred automatically. Steps 1 and 2 of our proof roughly correspond to lines \emph{403-441} in \cite{licata_agda2} and \emph{51-57} in \cite{licata_agda1}; in both proofs, the purpose of these steps is to mediate between a diagram involving transports (a ``square-over") and an equivalent diagram which does not (a ``cube"). Steps 3 and 4 then roughly correspond to lines \emph{60-67} in \cite{licata_agda1}; the commuting diagrams (or ``cubes") established in Step 4 are composed together, using a reordering of operations that is justified by Step 3.

\section*{Acknowledgement}
The author would like to thank her advisors, Profs. Steve Awodey and Frank Pfenning, for their help.

\bibliographystyle{plain}
\bibliography{references}

\begin{thebibliography}{10}

\bibitem{mayer-vietoris}
E.~Cavallo.
\newblock The {M}ayer-{V}ietoris sequence in homotopy type theory, 2014.
\newblock Talk at the Oxford Workshop on Homotopy Type Theory.

\bibitem{funext_proof}
N.~Gambino.
\newblock The univalence axiom and function extensionality, 2011.
\newblock Talk at the Oberwolfach Mini-Workshop on the Homotopy Interpretation
  of Constructive Type Theory. Proof due to V. Voevodsky; notes by K. Kapulkin
  and P. Lumsdaine available at
  \url{http://www.pitt.edu/~krk56/UA_implies_FE.pdf}.

\bibitem{ssets}
C.~Kapulkin, P.~Lumsdaine, and V.~Voevodsky.
\newblock The simplicial model of univalent foundations.
\newblock Available at arxiv.org as arXiv:1211.2851v1, 2012.

\bibitem{licata_agda1}
D.~Licata.
\newblock Agda repository, 2015.
\newblock Available at
  \url{https://github.com/dlicata335/hott-agda/blob/master/homotopy/TS1S1.agda}.

\bibitem{licata_agda2}
D.~Licata.
\newblock Agda repository, 2015.
\newblock Available at
  \url{https://github.com/dlicata335/hott-agda/blob/master/lib/cubical/Cube.agda}.

\bibitem{licata_cubical}
D.~Licata.
\newblock Cubical type theory, 2015.
\newblock Talk at the homotopy type theory seminar, Carnegie Mellon University.
  Slides available from the author's website.

\bibitem{licata_brunerie_13}
D.~Licata and G.~Brunerie.
\newblock $\pi_n(\mathbf{S}^n)$ in {Homotopy Type Theory}.
\newblock In {\em Certified Programs and Proofs}, volume 8307 of {\em LNCS},
  pages 1--16. Springer, 2013.

\bibitem{licata_brunerie}
D.~Licata and G.~Brunerie.
\newblock A cubical approach to synthetic homotopy theory.
\newblock In {\em Logic in Computer Science (LICS 2015)}, pages 92--103. IEEE,
  2015.

\bibitem{licata_shulman}
D.~Licata and M.~Shulman.
\newblock Calculating the fundamental group of the circle in {Homotopy Type
  Theory}.
\newblock In {\em Logic in Computer Science (LICS 2013)}, pages 223--232. IEEE
  Computer Society, 2013.

\bibitem{lumsdaine_blog}
P.~Lumsdaine.
\newblock Higher inductive types: a tour of the menagerie, 2011.
\newblock Post on the Homotopy Type Theory blog.

\bibitem{agda}
U.~Norell.
\newblock {\em Towards a practical programming language based on dependent type
  theory}.
\newblock PhD thesis, Chalmers University of Technology, 2007.

\bibitem{shulman_blog}
M.~Shulman.
\newblock {Homotopy Type Theory, VI}, 2011.
\newblock Post on the n-category cafe blog.

\bibitem{torus}
K.~Sojakova.
\newblock The torus {$T^2$} is equivalent to the product {$\Sn^1 \times \Sn^1$}
  of two circles, 2014.
\newblock Available at
  \url{http://ncatlab.org/homotopytypetheory/files/torus.pdf}.

\bibitem{coq}
Coq~Development Team.
\newblock {\em The {Coq} Proof Assistant Reference Manual, version 8.4pl3}.
\newblock INRIA, 2012.
\newblock Available at \url{coq.inria.fr}.

\bibitem{hott}
{The Univalent Foundations Program, Institute for Advanced Study}.
\newblock {\em {Homotopy Type Theory} - {U}nivalent Foundations of
  Mathematics}.
\newblock Univalent Foundations Project, 2013.

\bibitem{uf_talk}
V.~Voevodsky.
\newblock Univalent foundations of mathematics, 2011.
\newblock Invited talk at the Workshop on Logic, Language, Information and
  Computation (WoLLIC 2011).

\end{thebibliography}

\end{document}